\theoremstyle{plain} 
\newtheorem{theorem}{Theorem}
\newtheorem{lemma}[theorem]{Lemma}
\newtheorem{fact}{Fact}
\newtheorem{cor}[theorem]{Corollary}
\newtheorem{prop}[theorem]{Proposition}
\begin{document}

\title{Uniform 2D-Monotone Minimum Spanning Graphs\footnote{This research was financially supported by the Special Account for Research Grants of the National Technical University of Athens.}}

\author{Konstantinos Mastakas} 
\affil{School of Applied Mathematical and Physical Sciences\\ National Technical University of Athens, Athens, Greece \\ kmast@math.ntua.gr}

\date{}

\maketitle

\begin{abstract}
A geometric graph $G$ is \emph{$xy-$monotone} if each pair of vertices of $G$ is connected by a $xy-$monotone path.
We study the problem of producing the $xy-$monotone spanning geometric graph of a point set $P$ that (i) has the minimum cost, where the cost of a geometric graph is the sum of the Euclidean lengths of its edges, and (ii) has the least number of edges, in the cases that the Cartesian System $xy$ is specified or freely selected.
Building upon previous results, we easily obtain that the two solutions coincide when the Cartesian System is specified and are both equal to the rectangle of influence graph of $P$.
The \emph{rectangle of influence graph} of $P$ is the geometric graph with vertex set $P$ such that two points $p,q \in P$ are adjacent if and only if the rectangle with corners $p$ and $q$ does not include any other point of $P$.
When the Cartesian System can be freely chosen, we note that the two solutions do not necessarily coincide, however we show that they can both be obtained in $O(|P|^3)$ time.  
We also give a simple $2-$approximation algorithm for the problem of computing the spanning geometric graph of a $k-$rooted point set $P$, in which each root is connected to all the other points (including the other roots) of $P$ by $y-$monotone paths, that has the minimum cost.
\end{abstract}

\section{Introduction} 

A sequence of points in the Euclidean plane $q_0$, $q_1$, \ldots, $q_t$ is called \emph{$y-$monotone} if the sequence of their $y$ coordinates, i.e.{} $y(q_0)$, $y(q_1)$, \ldots, $y(q_t)$,
is either decreasing or increasing, with $y(p)$ denoting the $y$ coordinate of the point $p$.  
A geometric path $Q$ $=$ $(q_0$, $q_1$, \ldots, $q_t)$ is called \emph{$y-$monotone} if the sequence of its vertices, i.e.{} the sequence $q_0$, $q_1$, \ldots, $q_t$, is $y-$monotone. 
If $Q$ is $y'-$monotone for some axis $y'$ then $Q$ is called \emph{monotone}.
Let $G = (P,E)$ be a geometric graph.
If each $p,q \in P$ are connected by a $y-$monotone path then $G$ is called \emph{$y-$monotone}.
If $G$ is $y'-$monotone for some axis $y'$ then $G$ is called \emph{uniform monotone} (following the terminology of~\cite{MasS17}).
Uniform monotone graphs were called \emph{$1-$monotone graphs} by Angelini~\cite{Ang17}.
If each $p,q \in P$ are connected by a monotone path, where the direction of monotonicity might differ for different pairs of vertices, then $G$ is called \emph{monotone}.
Monotone graphs were introduced by Angelini et al.~\cite{AngCBFP12}.
Drawing an (abstract) graph as a monotone (geometric) graph has been a topic of research~\cite{Ang17,AngCBFP12,AngDKMRSW15,HeH17,OikS17}.

The \emph{Monotone Minimum Spanning Graph problem}, i.e.{} the problem of constructing the monotone spanning geometric graph of a given point set that has the minimum cost, where the cost of a geometric graph is the sum of the Euclidean lengths of its edges, was recently introduced (but not solved) in~\cite{MasS17} and it remains an open problem whether it is NP-hard.
Since the more general (without the requirement of monotonicity) \emph{Euclidean Minimum Spanning Tree problem} can be solved in $\Theta(|P| \log |P|)$ time~\cite{ShaH75}, this constitutes a great differentiation that is induced by the addition of the property of monotonicity.

A point set $P$ is \emph{$k-$rooted} if there exist $k$ points $r_1$, $r_2$, \ldots, $r_k \in P$ distinguished from the other points of $P$ which are called the \emph{roots} of $P$.
A geometric graph $G = (P,E)$ is called \emph{$k-$rooted} if $P$ is $k-$rooted and its roots are the roots of $P$.
A $k-$rooted geometric graph $G$ is \emph{$k-$rooted $y-$monotone} if each root $r \in P$ and each point $p \in P\setminus\{r\}$ are connected by $y-$monotone paths.
Similarly, $G$ is \emph{$k-$rooted uniform monotone} (following the terminology of~\cite{MasS17}) if it is $k-$rooted $y'-$monotone for some axis $y'$.
For simplicity, we may also denote point sets or geometric graphs that are $1-$rooted simply as rooted.
A polygon that is $2-$rooted $y-$monotone, in which its roots are its lowest and highest vertices, can be triangulated in linear time~\cite{GarJPT78}.
Lee and Preparata~\cite{LeeP77} preprocessed a subdivision $S$ of the plane such that the region in which a query point belongs can be found quickly, by (i) extending the geometric graph bounding $S$ to a $2-$rooted $y-$monotone planar geometric graph in which the roots are the highest and lowest vertices of $S$, and (ii) constructing a set of appropriate $y-$monotone paths from the lowest to the highest vertex of $S$.
Additionally, Lee and Preparata~\cite{LeeP77} noted that a $2-$rooted planar geometric graph, where all vertices have different $y$ coordinates, in which the roots are the highest and lowest vertices of the graph is $2-$rooted $y-$monotone if and only if each non-root vertex has both a neighbor above it and a neighbor below it.
Furthermore, a rooted geometric graph $G = (P,E)$, where all vertices have different $y$ coordinates, with a (single) root $r$ that is not the highest or lowest point of $P$ is rooted $y-$monotone if and only if each non-root vertex $p$ has a neighbor $q$ such that $y(q)$ is between $y(r)$ (inclusive) and $y(p)$~\cite{MasS17}.
Additionally, rooted uniform monotone graphs can be efficiently recognized~\cite{MasS17}.
The \emph{$k-$rooted $y-$monotone (uniform monotone) minimum spanning graph} (following the terminology of~\cite{MasS17}) of a $k-$rooted point set $P$ is the $k-$rooted $y-$monotone (uniform monotone) spanning graph of $P$ that has the minimum cost.
The rooted $y-$monotone (uniform monotone) minimum spanning graph\footnote{In~\cite{MasS17} it is shown that it is actually a tree.} of a rooted point set $P$ can be produced in $O(|P|\cdot \log^2 |P|)$ (resp.{}, $O(|P|^2\cdot \log |P|)$) time~\cite{MasS17}.
The problem of drawing a rooted tree as a rooted $y-$monotone minimum spanning graph is studied in~\cite{Mas18}.
 The ($|P|-$rooted) $y-$monotone minimum spanning graph of a point set $P$ is the geometric path that traverses all the points of $P$ by moving north, from the lowest point to the highest point of $P$~\cite{MasS17}.
Regarding the problem of producing the $k-$rooted $y-$monotone minimum spanning graph of a $k-$rooted point set $P$, with $1 < k < |P|$, it is an open problem, posed in~\cite{MasS17}, whether it is NP-hard. 

The restricted fathers tree problem was introduced in~\cite{Gut-BecH10} and is related to the rooted $y$-monotone minimum spanning graph problem constrained to rooted point sets $P$ in which the $y$ coordinate of the root is zero and the $y$ coordinates of the other points of $P$ are all negative (or all positive). 
The input of the \emph{restricted fathers tree problem} is a complete graph with root where each edge has a cost and each vertex has a value and the goal is to output the spanning tree in which the path from the root to each vertex decreases in value that has the minimum cost.
The restricted fathers tree problem is greedily solvable~\cite[Corollary 2.6]{Gut-BecH10}.

A geometric path $Q$ $=$ $(q_0$, $q_1$, \ldots, $q_t)$ is \emph{$xy-$monotone} if the sequence of its vertices is both $x-$monotone, i.e.{} the sequence $x(q_0)$, $x(q_1)$, \ldots, $x(q_t)$,
is monotone, and $y-$monotone.
$Q$ is \emph{2D-monotone} (following the terminology of~\cite{MasS17}) if it is $x'y'-$monotone for some orthogonal axes $x',y'$.
A geometric graph $G = (P,E)$ is \emph{2D-monotone} (following the terminology of~\cite{MasS17}) if each pair of points of $P$ is connected by a 2D-monotone path.
2D-monotone paths/graphs were called \emph{angle-monotone paths/graphs} by Bonichon et al.~\cite{BonBCKLV16}.
Bonichon et al.~\cite{BonBCKLV16} showed that deciding if a geometric graph $G = (P,E)$ is 2D-monotone can be done in $O(|P|\cdot |E|^2)$ time.
Triangulations with no obtuse internal angles are 2D-monotone graphs~\cite{DehFG15,Lub_oR17}.
There exist point sets for which any 2D-monotone spanning graph is not planar~\cite{BonBCKLV16}.
The problem of constructing 2D-monotone graphs with asymptotically less than quadratic edges was studied by Lubiw and Mondal~\cite{LubM18}.
It is an open problem, posed in~\cite{MasS17}, whether the 2D-monotone spanning graph of a point set $P$ that has the minimum cost can be efficiently computed. 

The \emph{(rooted) $xy-$monotone} and \emph{(rooted) uniform 2D-monotone} (using the terminology of~\cite{MasS17}) graphs are defined similar to the (rooted) $y-$monotone and (rooted) uniform monotone graphs.
Deciding if a rooted geometric graph $G = (P,E)$ is rooted $xy-$monotone (uniform 2D-monotone) can be done in $O(|E|)$ (resp.{}, $O(|E|\cdot \log |P|)$) time~\cite{MasS17}.
Additionally, the rooted $xy-$monotone (uniform 2D-monotone) spanning graph of a rooted point set $P$ that has the minimum cost\footnote{In~\cite{MasS17} it is shown that it is actually a tree, denoted as the \emph{rooted $xy-$monotone (uniform 2D-monotone) minimum spanning tree} in~\cite{MasS17} and abbreviated as the \emph{rooted $xy-$MMST} (resp.{}, \emph{rooted $2D-$UMMST}) in~\cite{MasS17}.} can be computed in $O(|P|\cdot \log^3|P|)$ (resp.{}, $O(|P|^2\log|P|)$) time~\cite{MasS17}. 
We focus on the production of the \emph{$xy$-monotone minimum spanning graph ($xy-$MMSG)} of a point set $P$, i.e.{} the $xy$-monotone spanning graph of $P$  that has the minimum cost, and the production of the \emph{uniform $2D-$monotone minimum spanning graph ($2D-$UMMSG)} of a point set $P$, i.e.{} the uniform $2D-$monotone spanning graph of $P$ that has the minimum cost.
 We also study the corresponding problems regarding the production of the spanning graphs with the least number of edges, i.e.{} the production of the $xy$-monotone spanning graph with the least number of edges and the production of the uniform $2D-$monotone spanning graph with the least number of edges.

A curve $C$ is \emph{increasing-chord}~\cite{LarM72,Rot94} if for each $p_1, p_2, p_3, p_4$ traversed in this order along it, the length of the line segment $\overline{p_1p_4}$ is greater than or equal to the length of $\overline{p_2p_3}$.
Alamdari et al.~\cite{AlaCGLP13} introduced \emph{increasing-chord graphs} which are the geometric graphs for which each two vertices are connected by an increasing-chord path.
Increasing-chord graphs are widely studied~\cite{AlaCGLP13,BahDMM17,DehFG15,MasS15,NolPR16}.
The problem of producing increasing-chord spanning graphs (where Steiner points may be added) of a point set $P$ was studied in~\cite{AlaCGLP13,DehFG15,MasS15}.
The approach employed in~\cite{AlaCGLP13,DehFG15,MasS15}, was to connect the points of $P$ by 2D-monotone paths since as noted by Alamdari et al.~\cite{AlaCGLP13} 2D-monotone paths are also increasing-chord paths.

Let $P$ be a point set and let $p,q \in P$ then $p$ and $q$ are \emph{rectangularly visible} if the rectangle with corners $p$ and $q$ does not include any other point of $P$.
Furthermore, the \emph{rectangle of influence graph} of $P$ is the geometric graph spanning $P$ such that $\overline{pq}$ is an edge of the graph if and only if $p$ and $q$ are rectangularly visible.
Alon et al.~\cite{AloFK85} denoted rectangularly visible points as~\emph{separated points} and the rectangle of influence graph as the \emph{separation graph}.
Computing the rectangle of influence graph $G = (P,E)$ of $P$ can be done in $O(|P|\cdot \log |P| + |E|)$ time~\cite{OveW88}.
There exist point sets $P$ for which the number of edges of their rectangle of influence graph is $\Omega (|P|^2)$~\cite{AloFK85}.
The rectangle of influence graph does not remain the same if the Cartesian System is rotated~\cite[Proposition 3]{IchS85}.
Drawing an abstract graph as a rectangle of influence graph has been studied~\cite{LioLMW98}.

\noindent \textbf{Our Contribution.}
Building upon previous results, we easily obtain that given a point set $P$ the $xy-$MMSG of $P$ is equal to the $xy-$monotone spanning graph of $P$ that has the least number of edges and are both equal to the rectangle of influence graph of $P$.
We note that given a point set $P$ the $2D-$UMMSG of $P$ does not necessarily coincide with the uniform 2D-monotone spanning graph of $P$ that has the least number of edges.
We also show that both the $2D-$UMMSG of $P$ and the uniform 2D-monotone spanning graph of $P$ that has the least number of edges can be produced in $O(|P|^3)$ time.    
Additionally, we give a simple $2-$approximation algorithm for the problem of producing the $k-$rooted $y-$monotone minimum spanning graph of a $k-$rooted point set. 

\section{Preliminaries}

\subsection{$xy-$Monotone Minimum Spanning Graphs}

 Angelini~\cite{Ang17} noted the following Fact regarding $y-$monotone graphs.

\begin{fact}[Angelini~\cite{Ang17}]
\label{fact:yMonGraphNecessaryEdges}
Let $G = (P,E)$ be a $y-$monotone graph where no two points of $P$ have the same $y$ coordinate and let $p,q \in P$ such that for each $r \in P\setminus\{p,q\}$ the sequence $p,r,q$ is not $y-$monotone.
Then, $p$ and $q$ are adjacent in $G$.
\end{fact}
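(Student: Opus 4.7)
The plan is to observe that the hypothesis forces $p$ and $q$ to be consecutive in the vertical ordering of $P$, and then exploit the $y$-monotonicity of $G$ to conclude that any $y$-monotone path from $p$ to $q$ must be a single edge.

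First I would normalize: since no two points of $P$ share a $y$-coordinate, assume without loss of generality that $y(p) < y(q)$. Under this convention, a sequence $p, r, q$ is $y$-monotone exactly when $y(p) < y(r) < y(q)$. Hence the hypothesis translates to the geometric condition that no point of $P \setminus \{p,q\}$ lies in the open horizontal strip strictly between the horizontal lines through $p$ and $q$; in particular, $p$ and $q$ are consecutive in the sorted order of $P$ by $y$-coordinate.

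Next I would invoke the assumption that $G$ is $y$-monotone to produce a $y$-monotone path $p = v_0, v_1, \ldots, v_t = q$ in $G$. Because the endpoints satisfy $y(v_0) < y(v_t)$ and the path is $y$-monotone, the sequence $y(v_0), y(v_1), \ldots, y(v_t)$ must be strictly increasing (strict because all $y$-coordinates are distinct). Suppose for contradiction that $t \geq 2$. Then $v_1$ is a vertex of $P$ distinct from both $p$ and $q$ with $y(p) = y(v_0) < y(v_1) < y(v_t) = y(q)$, so the sequence $p, v_1, q$ is $y$-monotone, contradicting the hypothesis on $p$ and $q$. Therefore $t = 1$, which means $\overline{pq} \in E$.

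The argument is essentially immediate once the hypothesis is rephrased in terms of the $y$-strip between $p$ and $q$, so there is no real obstacle; the only delicate point is making sure one uses that the coordinates are all distinct, so that $y$-monotonicity of a three-term sequence amounts to a strict betweenness and the $y$-coordinates along the guaranteed path increase strictly rather than weakly.
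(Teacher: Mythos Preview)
Your argument is correct: once you normalize to $y(p)<y(q)$, the hypothesis says exactly that no vertex lies strictly between $p$ and $q$ in the $y$-order, and any $y$-monotone $p$--$q$ path with more than one edge would produce such a vertex. The paper does not give its own proof of this statement---it is quoted as a fact from Angelini~\cite{Ang17}---so there is nothing in the paper to compare your approach against; your write-up is essentially the natural (and standard) justification.
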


Fact~\ref{fact:yMonGraphNecessaryEdges} is easily extended in the context of $xy-$monotone graphs.
More specifically, let $G = (P,E)$ be a $xy$-monotone graph and $p, q \in P$ such that for each $r \in P\setminus\{p,q\}$ the sequence of points $p,r,q$ is not $xy-$monotone, then $p$ and $q$ are adjacent in $G$.
Alon et al.~\cite{AloFK85} noted that the points $p,q$ of a point set $P$ are rectangularly visible if and only if for each $r \in P\setminus\{p,q\}$ the sequence of points $p,r,q$ is not $xy-$monotone.
Hence, the rectangle of influence graph of $P$ is a subgraph of $G$.

Liotta et al.~\cite[Lemma 2.1]{LioLMW98} showed that the rectangle of influence graph of a point set is a $xy-$monotone graph\footnote{Technically speaking, Liotta et al.~\cite{LioLMW98} showed that the rectangle of influence graph of a point set is a graph such that each two vertices are connected by a path lying inside the rectangle defined by these vertices but upon careful reading the path that is obtained in their proof is $xy-$monotone.}.

From the previous two sentences, regarding the rectangle of influence graph, we obtain the following Corollary. 

\begin{cor} \label{cor:minSpanXy}
Let $P$ be a point set. The $xy-$MMSG of $P$ and the $xy-$monotone spanning graph of $P$ that has the least number of edges coincide and they are both equal to the rectangle of influence graph of $P$. 
\end{cor}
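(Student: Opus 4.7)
The plan is to combine the two observations just established in the preceding paragraphs: namely, that every $xy-$monotone spanning graph of $P$ contains all edges of the rectangle of influence graph of $P$ as a subgraph, and that the rectangle of influence graph of $P$ is itself $xy-$monotone. Once these are in hand, the corollary follows with essentially no further work, because both cost and edge count are monotone with respect to edge inclusion.

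More concretely, I would first let $G=(P,E)$ be any $xy-$monotone spanning graph of $P$ and let $R=(P,E_R)$ denote the rectangle of influence graph of $P$. The extension of Fact~\ref{fact:yMonGraphNecessaryEdges} to the $xy-$monotone setting, together with the characterization of Alon et al.~\cite{AloFK85} of rectangular visibility in terms of the non-existence of an intermediate point $r$ for which $p,r,q$ is $xy-$monotone, yields $E_R \subseteq E$. Thus $|E_R|\leq |E|$, and since the cost of a geometric graph is the sum of Euclidean lengths of its (positively weighted) edges, the cost of $R$ is at most the cost of $G$.

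Next, since $R$ itself is an $xy-$monotone spanning graph of $P$ by Liotta et al.~\cite[Lemma 2.1]{LioLMW98} (as noted in the excerpt), $R$ is a feasible candidate for both the minimum-cost and the minimum-edge problems. Combining with the inequalities above, $R$ simultaneously attains the minimum number of edges and the minimum cost among all $xy-$monotone spanning graphs of $P$; and moreover, any such optimum must contain $E_R$, so the optimum is unique and equals $R$ in both cases.

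There is no real obstacle here, only a bookkeeping point: one should briefly remark that the uniqueness of the minimizer (rather than just the existence of one equal to $R$) also follows from the subgraph inclusion, so that the $xy-$MMSG and the $xy-$monotone spanning graph with the least number of edges genuinely coincide as geometric graphs, not merely have the same cost and edge count.
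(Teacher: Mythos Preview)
Your proposal is correct and matches the paper's approach exactly: the paper simply states that the corollary follows ``from the previous two sentences'' (the subgraph inclusion derived from Alon et al.~\cite{AloFK85} and the $xy$-monotonicity of the rectangle of influence graph from Liotta et al.~\cite{LioLMW98}), which is precisely what you spell out. Your explicit remark on uniqueness is a small bonus the paper leaves implicit.
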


We recall that the rectangle of influence graph $G = (P,E)$ of $P$ can be produced in $O(|P|\cdot \log |P| + |E|)$ time~\cite{OveW88} which is optimal~\cite{OveW88} and that there exist point sets $P$ for which the rectangle of influence graph has size $\Omega (|P|^2)$~\cite{AloFK85} as well as point sets for which it has linear size~\cite{AloFK85}.

\subsection{Rooted Uniform 2D-Monotone Graphs}
Mastakas and Symvonis~\cite{MasS17} studied the problem of recognizing rooted uniform 2D-monotone graphs. 
They initially noted the following Fact.

\begin{fact}[Observation 8 in~\cite{MasS17}] \label{fact:recRootEvents}
Let $G$ be a geometric graph $G = (P,E)$ with root $r$. If one rotates a Cartesian System $x'y'$, then $G$ may become rooted $x'y'-$monotone while previously it was not, or vice versa, only when the $y'$ axis becomes (or leaves the position where it previously was) parallel or orthogonal to
\begin{enumerate}[nosep]
\item a line passing through $r$ and a point $p \in P\setminus\{r\}$.
\item an edge $\overline{pq} \in E$, where $p,q \neq r$.
\end{enumerate}
\end{fact}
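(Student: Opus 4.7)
The plan is to prove the contrapositive locally: on any open interval of $y'$-axis orientations that avoids every angle listed in items (1) and (2), the rooted $x'y'$-monotonicity status of $G$ is constant. This suffices, since if the status differs at two orientations, some critical angle must lie between them.

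The first key observation is that whether a path $v_0, v_1, \ldots, v_k$ in $G$ is $x'y'$-monotone depends only on the signs of $y'(v_{i+1}) - y'(v_i)$ and $x'(v_{i+1}) - x'(v_i)$ for each $i$: the path is $y'$-monotone exactly when all those $y'$-signs are nonnegative or all are nonpositive, and similarly for $x'$. Consequently, whether $G$ is rooted $x'y'$-monotone is determined purely by the $x'$- and $y'$-signs of coordinate differences across the edges of $G$ (since every relevant path is composed of such edges). Second, parameterizing the rotation by the angle $\theta$ of the $y'$-axis, we have the identity $y'(q) - y'(p) = -\sin\theta\,(x(q) - x(p)) + \cos\theta\,(y(q) - y(p))$ (and an analogous formula for $x'$), which vanishes precisely when $\theta$ corresponds to the $y'$-axis being orthogonal to the segment $\overline{pq}$; similarly, the $x'$-difference vanishes precisely at parallel orientations. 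Hence the signs of these differences can flip only at these critical angles.

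Applying this to each edge of $G$: an edge incident to $r$ has the form $\overline{rp}$ for some $p \in P\setminus\{r\}$, and its sign-flip angles lie on the line through $r$ and $p$, which is covered by item 1; an edge $\overline{pq}$ with both endpoints distinct from $r$ is covered by item 2. Therefore, on any open interval of $\theta$ avoiding the angles listed in items (1) and (2), no edge of $G$ undergoes a sign flip, so every path in $G$ retains its $x'y'$-monotonicity status, and the rooted $x'y'$-monotonicity of $G$ is constant throughout the interval. The main obstacle I anticipate is making the ``determined purely by signs'' reduction airtight in boundary cases — orientations where some edge happens to be exactly axis-aligned in $x'y'$, i.e.{} a sign equals zero — but this is a routine formal matter, since a zero sign is consistent with both increasing and decreasing monotonicity and so does not interfere with the constancy argument on the surrounding open interval.
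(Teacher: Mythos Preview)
Your argument is correct. The paper does not give its own proof of this statement: it is quoted verbatim as Observation~8 from~\cite{MasS17} and used as a black box, so there is no in-paper proof to compare against.

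It is worth noting that your argument actually establishes something slightly \emph{stronger} than the stated fact. You show that the rooted $x'y'$-monotonicity status of $G$ is determined entirely by the $x'$- and $y'$-sign pattern on the \emph{edges} of $G$, so the only critical orientations are those where some edge becomes parallel or orthogonal to an axis. Item~1 of the fact, by contrast, lists lines through $r$ and \emph{every} point $p\in P\setminus\{r\}$, not just neighbors of $r$. The reason for this looser formulation in~\cite{MasS17} is that their recognition algorithm is built on the local characterization ``each $p\neq r$ has a neighbor in the rectangle with corners $r$ and $p$'' (see the footnote to Fact~\ref{fact:rU2Dm-recCost}); checking membership in that rectangle requires comparing each vertex's coordinates to those of $r$, which is why lines through $r$ and arbitrary points appear as events. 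Your path-based argument bypasses that characterization and works directly from the definition, yielding a smaller (and still sufficient) set of critical angles. Either route proves the fact as stated, since your critical set is contained in the one listed.
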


Based on Fact~\ref{fact:recRootEvents}, Mastakas and Symvonis~\cite{MasS17} gave a rotational sweep algorithm denoted as the \emph{rooted uniform 2D-monotone recognition algorithm} in~\cite{MasS17}.

\begin{fact}[\cite{MasS17}] \label{fact:rU2Dm-recCost}
The rooted uniform 2D-monotone recognition algorithm \begin{enumerate}[nosep,label=\roman*)]
\item computes, in $O(|E| \cdot \log |P|)$ time, a set of sufficient Cartesian Systems, of size $O(|E|)$ , which are associated with (1) lines passing through $r$ and a point $p \in P\setminus\{r\}$ and (2) edges $\overline{pq} \in E$, where $p,q \neq r$.
\item tests, in $O(|E|)$ total time\footnote{Technically speaking in~\cite{MasS17} it is shown that the remaining steps, i.e.{} the steps after the computation of the sufficient Cartesian Systems, of the rooted uniform 2D-monotone recognition algorithm take $O(|E|\cdot \log |P|)$ total time. 
Internally in the rooted uniform 2D-monotone recognition algorithm given in~\cite{MasS17}, for each $p \in P\setminus\{r\}$ it is stored the set of adjacent points to $p$ that are in the rectangle  w.r.t.{} the Cartesian System $x'y'$ with corners $p$ and $r$, which is denoted as $A(p,x',y')$ in~\cite{MasS17}.
Furthermore, it is stored the set of points $p \in P\setminus\{r\}$ for which $|A(p,x',y')| > 0$ which is denoted as $B(x',y')$ in~\cite{MasS17}.
However, only the cardinalities of these sets are necessary~\cite[Lemma 9]{MasS17}, hence if instead of the sets $A(p,x',y'), p \in P$ and $B(x',y')$ their cardinalities are stored, the remaining steps of the rooted uniform 2D-monotone recognition algorithm take $O(|E|)$ total time.
},  
if $G$ is rooted $x'y'-$monotone for some Cartesian System $x'y'$ in the previously computed set of sufficient Cartesian Systems.
\end{enumerate}
\end{fact}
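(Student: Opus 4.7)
The plan is to exploit Fact~\ref{fact:recRootEvents} to enumerate all ``critical'' orientations of the $y'$ axis at which the status of $G$ (as rooted $x'y'$-monotone) can possibly change. These critical orientations come in two families: the $|P|-1$ lines through $r$ and another point of $P$, and the (at most) $|E|$ edges $\overline{pq}$ with $p,q \neq r$; each line/edge contributes a constant number of critical angles (parallel and orthogonal). If we assume $G$ is connected, so that $|E| \ge |P|-1$, we obtain $O(|E|)$ sufficient Cartesian Systems. Sorting the corresponding angles takes $O(|E| \log |E|) = O(|E| \log |P|)$ time since $|E| \le |P|^2$, which settles part (i).

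For part (ii) the idea is to sweep through the sufficient Cartesian Systems in sorted angular order while maintaining the following counters, in the notation of the footnote: for each non-root $p \in P$, the integer $|A(p,x',y')|$ recording how many neighbors of $p$ lie inside the closed rectangle (with respect to the current $x'y'$) having $p$ and $r$ as opposite corners, together with the global counter $|B(x',y')|$ recording how many non-root points $p$ satisfy $|A(p,x',y')| > 0$. The graph $G$ is rooted $x'y'$-monotone in the current system if and only if $|B(x',y')| = |P|-1$, so the test at each critical orientation is a single comparison.

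The key combinatorial observation is that between two consecutive critical orientations none of these counters can change: a neighbor $q$ of $p$ can enter or leave the rectangle with corners $p$ and $r$ only when the axis rotates through an orientation where the line through $p,q$ is axis-parallel, or the line through $p,r$ (or $q,r$) is axis-parallel, both of which are precisely the events enumerated by Fact~\ref{fact:recRootEvents}. At each such event only $O(1)$ pairs $(p,q)$ are affected, so a constant number of counters $|A(p,\cdot,\cdot)|$ flip by $\pm 1$ and $|B(\cdot,\cdot)|$ is updated in $O(1)$ time. Summing over the $O(|E|)$ critical events yields $O(|E|)$ total time for the testing phase.

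The step I expect to be the main obstacle is the careful bookkeeping that each event triggers only a constant number of pair updates: this requires a case analysis distinguishing the two event types of Fact~\ref{fact:recRootEvents}, and, within each type, identifying which corner of which rectangle is being crossed (so that the correct incidence is added or removed, rather than double-counted). Once this local $O(1)$ update bound is firmly established, both the correctness (via the characterization by $|B(x',y')|$) and the claimed $O(|E|)$ running time of the sweep follow immediately by summation.
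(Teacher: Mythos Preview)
Your overall strategy matches the paper's treatment: this Fact is cited from~\cite{MasS17} rather than proved here, and the only argument the paper supplies is the footnote explaining that storing the \emph{cardinalities} $|A(p,x',y')|$ and $|B(x',y')|$ (instead of the sets themselves) suffices, which is exactly the sweep you describe. So approach-wise you are aligned with the paper.

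There is, however, one inaccuracy in your sketch. You assert that ``at each such event only $O(1)$ pairs $(p,q)$ are affected.'' This is fine for a type-(2) event (an edge $\overline{pq}$ with $p,q\neq r$ becomes axis-parallel/orthogonal, and only the two memberships $q\in A(p,\cdot)$ and $p\in A(q,\cdot)$ can flip). It is \emph{not} true for a type-(1) event: when $y'$ becomes parallel or orthogonal to the line through $r$ and a point $s$, the point $s$ may cross the $r$-side of the rectangle with corners $p$ and $r$ for \emph{every} neighbor $p$ of $s$, so up to $\deg(s)$ counters $|A(p,\cdot)|$ can change at that single event. The global bound is still $O(|E|)$, but only by amortization: summing $\deg(s)$ over the $|P|-1$ type-(1) events gives $\sum_{s\neq r}\deg(s)\le 2|E|$, and adding the $O(|E|)$ type-(2) events with $O(1)$ work each yields $O(|E|)$ in total. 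If you replace the per-event $O(1)$ claim by this amortized count, the rest of your argument goes through.
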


\begin{fact}[Theorem 1 in~\cite{MasS17}] \label{fact:ryMMSG_computation}
Let $P$ be a rooted point set then the rooted $y-$monotone minimum spanning graph of $P$ can be obtained in $O(|P|\cdot \log^2 |P|)$ time.
\end{fact}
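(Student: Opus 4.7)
The plan is to exploit the local characterization of rooted $y$-monotone graphs recalled earlier in the introduction: a rooted geometric graph $G = (P,E)$ with root $r$ (and all $y$-coordinates distinct) is rooted $y$-monotone if and only if every non-root vertex $p$ has at least one neighbor $q$ whose $y$-coordinate lies between $y(r)$ (inclusive) and $y(p)$. A crucial observation is that an edge $\overline{pq}$ with, say, $y(r) \le y(q) < y(p)$ fulfils the condition for $p$ but not for $q$; hence the required edges can be charged one-to-one to the non-root vertices with no possibility of sharing. Consequently the cost of any feasible solution is at least
\[ \sum_{p \in P \setminus \{r\}} \min \bigl\{ \|p - q\| : q \in P,\ y(q) \text{ lies between } y(r) \text{ (inclusive) and } y(p) \bigr\}. \]
Tightness would then be established by picking, for each non-root $p$, one edge to a nearest admissible candidate $q_p$: the parent pointers $p \mapsto q_p$ form a forest rooted at $r$, because each $q_p$ is strictly closer to $r$ in $|y-y(r)|$ than $p$ is, so iterating them yields a $y$-monotone path to $r$ from every vertex.

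Next, the construction reduces to $|P|-1$ Euclidean nearest-neighbor queries in a growing point set. I would split $P \setminus \{r\}$ into the points lying above $r$ and those lying below $r$ and handle the two sides independently. For the upper side, sort the points by increasing $y$-coordinate and process them in this order, maintaining a semi-dynamic (insertion-only) Euclidean nearest-neighbor data structure initialized with $\{r\}$. Before processing the $i$-th point $p_i$, query the structure for its nearest neighbor $q_{p_i}$; by construction, this is exactly the optimal parent of $p_i$, because the structure holds precisely the points with $y$-coordinate in $[y(r), y(p_i))$. Record the edge $\overline{p_i q_{p_i}}$ and then insert $p_i$ into the structure. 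The lower side is handled symmetrically, and the output is the union of the recorded edges.

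The complexity bound follows from plugging in a standard insertion-only Euclidean nearest-neighbor data structure, e.g.{} the Bentley--Saxe logarithmic method applied on top of the static Voronoi-diagram-plus-planar-point-location structure, which supports both insertions and queries in $O(\log^2 |P|)$ amortized time, yielding $O(|P| \log^2 |P|)$ overall, in addition to the $O(|P| \log |P|)$ for sorting. The main technical obstacle is the decoupling argument in the first paragraph: one must be sure that selecting, independently and locally, one nearest admissible neighbor per non-root vertex produces a globally valid rooted $y$-monotone graph of minimum cost. Once this is justified via the local characterization together with the one-to-one charging scheme, the remaining task is a classical semi-dynamic nearest-neighbor problem for which the quoted running time is well known.
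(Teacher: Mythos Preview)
The paper does not contain a proof of this statement: it is quoted verbatim as Theorem~1 of~\cite{MasS17} and used as a black box. Your argument is correct, and since the local characterization you invoke (every non-root vertex has a neighbor whose $y$-coordinate lies between $y(r)$ and its own) is precisely the one the introduction attributes to~\cite{MasS17}, your reconstruction is almost certainly the intended one. The decoupling step---charging one mandatory edge to each non-root vertex and observing that no edge can be charged twice because the admissibility condition orients it toward the endpoint farther from $r$ in $y$---is sound, and the reduction to an insertion-only Euclidean nearest-neighbor problem handled via Bentley--Saxe over Voronoi diagrams gives the stated $O(|P|\log^2|P|)$ bound.
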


\section{The 2D-UMMSG Problem} \label{sec:2D-UMMSGp}

We now deal with the construction of the $2D-$UMMSG and the uniform $2D-$monotone spanning graph with the least number of edges.
We initially show that the $2D-$UMMSG of a point set $P$ can be obtained in $O(|P|^3)$ time.
For this, we employ a rotational sweep technique.
Our approach regarding the construction of the $2D-$UMMSG is similar to the approach employed for the calculation of the rooted uniform $2D-$monotone spanning graph that has the minimum cost in~\cite{MasS17}.
We assume that no three points of $P$ are collinear and no two line segments $\overline{pq}$ and $\overline{p'q'}$, $p,p',q,q' \in P,$ are parallel or orthogonal.

Let $P$ be a point set and $p$ be a point of $P$.
Let $RV(p,x',y')$ denote the subset of points of $P$ that are rectangularly visible from $p$ w.r.t.{} the Cartesian System $x'y'$.
See for example, Figure~\ref{fig:xyRecVis}(a).

\begin{prop} \label{prop:events}
If we rotate a Cartesian System $x'y'$ counterclockwise, then the $x'y'-$MMSG of $P$ changes only when $y'$ reaches or moves away from a line perpendicular or parallel to a line passing through two points of $P$.
\end{prop}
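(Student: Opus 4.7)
The plan is to leverage Corollary~\ref{cor:minSpanXy}, which identifies the $x'y'$-MMSG with the rectangle of influence graph of $P$ taken with respect to the Cartesian System $x'y'$. Thus the proposition reduces to tracking when this rectangle of influence graph changes as $x'y'$ rotates continuously, i.e., when some edge $\overline{pq}$ is added to or removed from it.

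First I would fix two points $p, q \in P$ and ask when the pair $(p,q)$ transitions between being rectangularly visible and not being rectangularly visible as $x'y'$ rotates. By definition this happens exactly when some third point $r \in P\setminus\{p,q\}$ crosses the boundary of the closed axis-aligned rectangle $R_{pq}(x',y')$ with corners $p$ and $q$. Since the four sides of $R_{pq}(x',y')$ are parallel to the $x'$ and $y'$ axes and pass through $p$ or $q$, the point $r$ lies on this boundary precisely when one of the relations $x'(r) = x'(p)$, $x'(r) = x'(q)$, $y'(r) = y'(p)$, $y'(r) = y'(q)$ holds; equivalently, when the line $\overline{pr}$ or the line $\overline{qr}$ becomes parallel or perpendicular to the $y'$ axis. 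Under the genericity assumptions stated just before the proposition (no three points collinear and no two line segments through points of $P$ being parallel or orthogonal), these events are isolated in the rotation parameter.

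Next I would combine these observations: any change of the rectangle of influence graph during the rotation corresponds to some triple $(p,q,r)$ for which $r$ crosses the boundary of $R_{pq}(x',y')$, and each such crossing occurs at an orientation of $y'$ that is parallel or perpendicular to a line passing through two points of $P$ (namely the line $\overline{pr}$ or $\overline{qr}$). Conversely, between two consecutive such critical orientations, every triple has a strict inequality configuration that is preserved on the open rotation interval by continuity, so the graph is constant there.

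The only subtlety I anticipate is bookkeeping around the boundary/closed-versus-open convention for ``rectangularly visible'' and confirming that the genericity assumption rules out simultaneous transitions that could otherwise hide a change of the graph between two critical events; but both are handled by the standing assumption that no three points are collinear and no two inter-point segments are parallel or orthogonal, which guarantees that the critical orientations are distinct and that the status of each triple is well-defined and constant between consecutive critical orientations. This completes the plan for the proof of the proposition.
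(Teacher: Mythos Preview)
Your proposal is correct and follows essentially the same route as the paper: reduce via Corollary~\ref{cor:minSpanXy} to tracking changes in rectangular visibility, and observe that such changes occur precisely when the rotating axes become parallel or perpendicular to a line through two points of $P$. The paper's proof is terser (it simply asserts the change condition for $RV(p,x',y')$ with reference to a figure), while you spell out the boundary-crossing argument for a third point $r$ and the role of the genericity assumptions, but the underlying idea is the same.
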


\begin{proof}
If we rotate the Cartesian System $x'y'$ counterclockwise then the $RV(p,x',y')$ for a point $p \in P$ changes only when $y'$ reaches or moves away from a line perpendicular or parallel to a line passing through two points of $P$; e.g.{} see Figure~\ref{fig:xyRecVis}.
From the previous and since the $RV(p,x',y'), p \in P$, equals to the set of adjacent vertices of $p$ in the $x'y'-$MMSG of $P$ (Corollary~\ref{cor:minSpanXy}), we obtain the Proposition. 
\end{proof}

\begin{figure}[htbp] 
\centering
\begin{tabular}{ccc}
\includegraphics[scale=0.35]{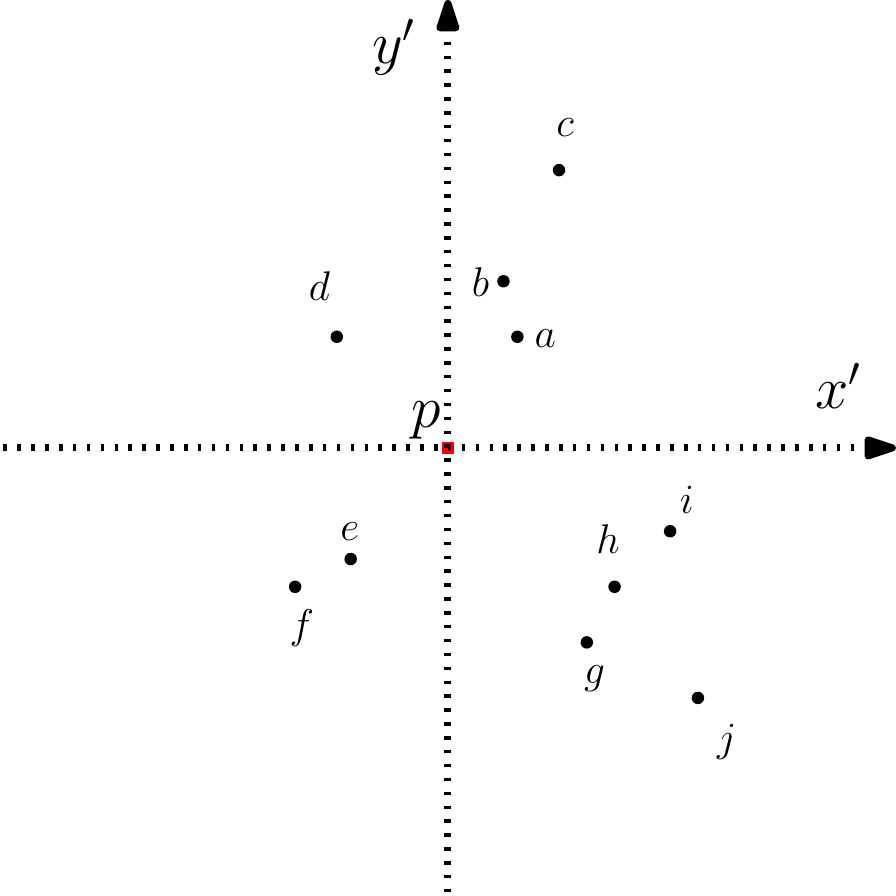} &
\includegraphics[scale=0.35]{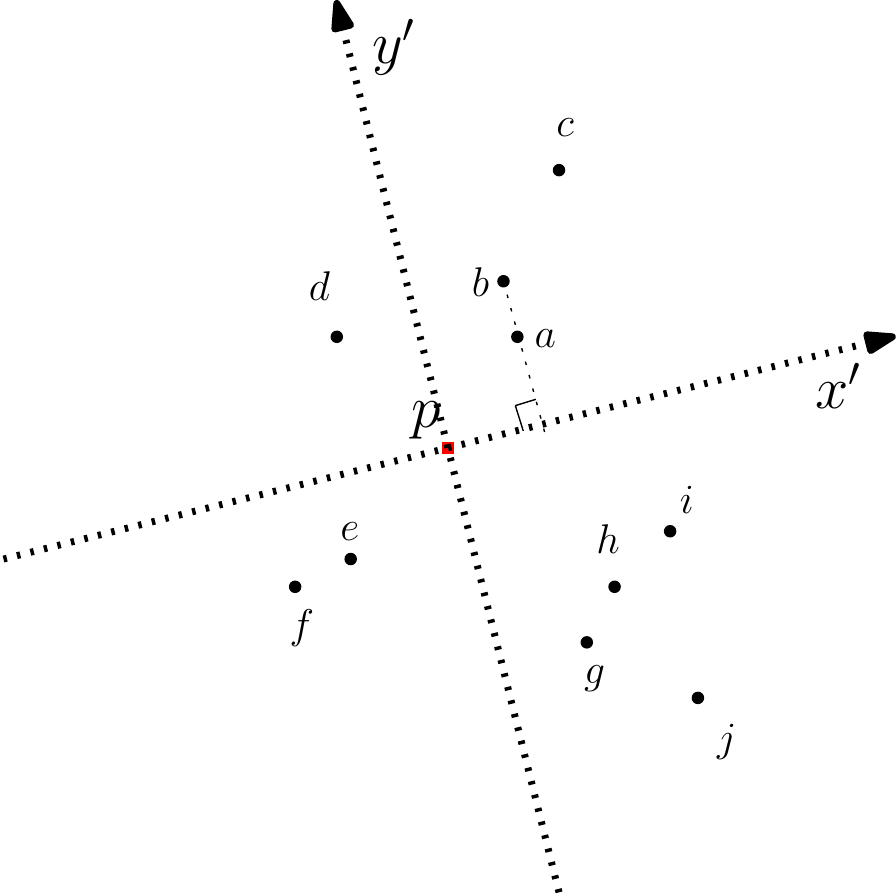} & 
\includegraphics[scale=0.35]{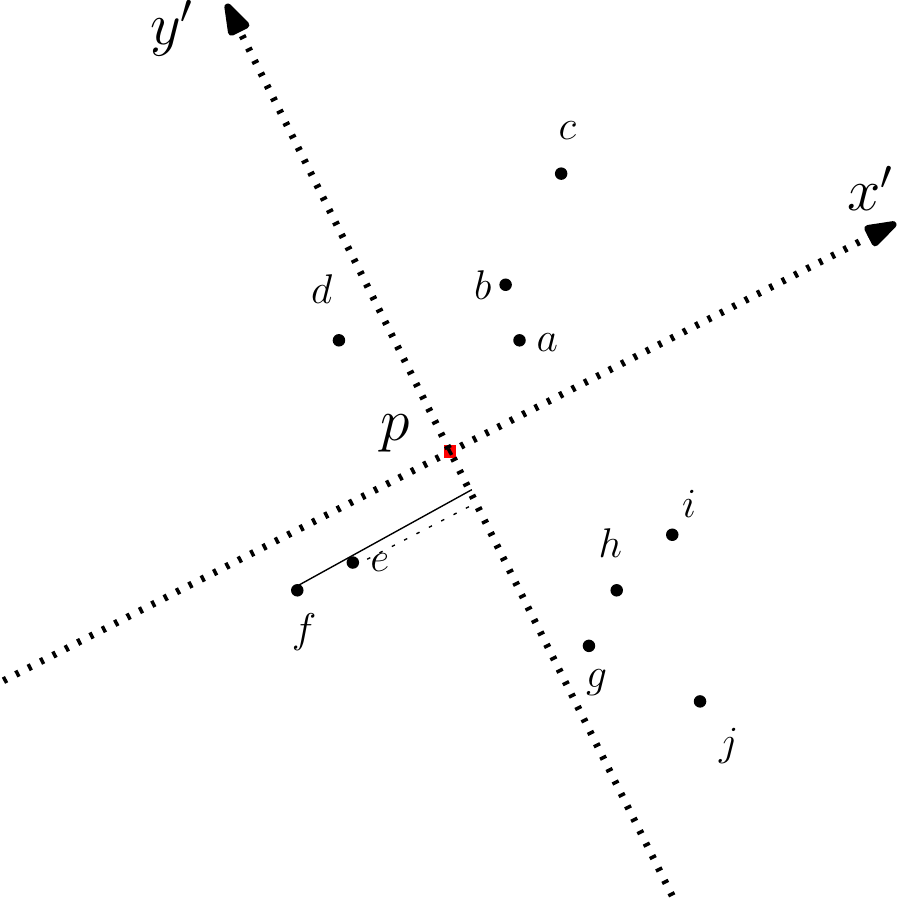}
 \\
(a) & (b) & (c)   
\end{tabular}
\caption{
In (a) $RV(p,x',y')$ $=$ $\{a$, $b$, $d$, $e$, $g$, $h$, $i\}$. In (b) the $y'$ becomes parallel to the  $\overline{ab}$ and now $b$ is not rectangularly visible from $p$.
Finally, in (c) the $y'$ has left the position where it previously was orthogonal to the $\overline{ef}$ and now $f$ becomes rectangularly visible from $p$. 
}\label{fig:xyRecVis}
\end{figure}

Let $S$ $=$ $\{s \in [0, \frac{\pi}{2}):$ a line of slope $s$ is perpendicular or parallel to a line passing through two points of $P\}$.
Let $S$ $=$ $\{s_1$, $s_2$, \ldots, $s_l\}$ with $l = \binom{|P|}{2}$ such that $0$ $\leq$ $s_1$ $<$ $s_2$ $<$ \dots $<$ $s_l$ $<$ $\frac{\pi}{2}$.
We now define the set $S_{\text{sufficient}}$ to be equal to $\{s_1$, $\frac{s_1+s_2}{2}$, $s_2$, $\frac{s_2 + s_3}{2}$, \ldots, $s_l$, $\frac{s_l +\frac{\pi}{2}}{2}\}$.
Let $x_1y_1$, $x_2y_2$, \ldots, $x_{2l}y_{2l}$ be the Cartesian Systems in which the vertical axis has slope in $S_{\text{sufficient}}$, ordered w.r.t.{} the slope of their vertical axis.

\begin{theorem}\label{thm:uniMonMinimum}
The uniform $2D-$monotone minimum spanning graph of a point set $P$ can be computed in $O(|P|^3)$ time. 
\end{theorem}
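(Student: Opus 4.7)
The plan is to adapt the rotational sweep paradigm of~\cite{MasS17} (used there for the rooted case) to the unrooted setting, exploiting Proposition~\ref{prop:events} and Corollary~\ref{cor:minSpanXy}, which identify the $x'y'$-MMSG with the rectangle of influence graph with respect to $x'y'$. By Proposition~\ref{prop:events}, as we rotate the Cartesian system, this graph changes only at the angles in $S$, and therefore only finitely many distinct candidate graphs arise; it suffices to inspect one representative per interval, which is precisely what $S_{\text{sufficient}}$ provides.

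First I would compute and sort $S_{\text{sufficient}}$ by slope in $O(|P|^2\log|P|)$ time, and then build the rectangle of influence graph and its total cost for the initial orientation $x_1y_1$ by examining every pair in $O(|P|^2)$ time; this initialises the sweep with a correct $x_1y_1$-MMSG and its cost.

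The heart of the argument is the incremental sweep. Moving from one orientation in $S_{\text{sufficient}}$ to the next, one crosses at most one event in $S$, corresponding to the $y'$-axis becoming parallel or perpendicular to a line $\overline{ab}$. The crucial observation, in the same spirit as Proposition~\ref{prop:events}, is that under the general position assumption only rectangular-visibility pairs involving $a$ or $b$ can flip: for the visibility of $(p,q)$ to change, some third point $r$ must cross a side of the rectangle with corners $p,q$, which forces $r$ to share an $x'$- or $y'$-coordinate with $p$ or $q$; by general position such a coincidence of coordinates at this event is witnessed only by the pair $\{a,b\}$, forcing $\{r,p\}=\{a,b\}$ or $\{r,q\}=\{a,b\}$. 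Consequently the only pairs whose edge-status may change at the event are $(a,q)$ and $(b,q)$ for $q \in P\setminus\{a,b\}$, together with $(a,b)$ itself, giving at most $2(|P|-2)+1 = O(|P|)$ pairs to recheck. Each recheck is $O(1)$ (test membership of the relevant point in an axis-aligned rectangle), and I would update both the edge set and the running total cost in $O(|P|)$ time per event, while maintaining the running minimum and the orientation achieving it.

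Summing over the $O(|P|^2)$ events yields the claimed $O(|P|^3)$ bound; to output the actual $2D$-UMMSG I would remember the winning orientation and recompute its rectangle of influence graph in $O(|P|^2)$ time (which is within budget and, in the worst case, optimal since the output can have $\Omega(|P|^2)$ edges). The step I expect to be the main obstacle is the local-update lemma: one has to argue carefully that only $O(|P|)$ pairs flip per event, and handle the pair $(a,b)$ itself separately since its defining rectangle degenerates to a segment exactly at the event. This is precisely why $S_{\text{sufficient}}$ contains both the $s_i$ and the midpoints $\tfrac{s_i+s_{i+1}}{2}$: by evaluating at representatives strictly inside each open interval one avoids the degenerate configuration, while the general-position hypothesis ensures that distinct events do not coincide and hence that updates can be attributed to a unique witnessing pair $\overline{ab}$.
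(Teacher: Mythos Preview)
Your approach is essentially that of the paper: a rotational sweep over the $O(|P|^2)$ orientations in $S_{\text{sufficient}}$, updating only pairs involving the two witness points $a,b$ at each event, and you correctly argue that only $O(|P|)$ visibility pairs can flip per event.

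There is one missing ingredient, however. Your $O(1)$ ``recheck'' tests whether the relevant third point (say $b$) lies in the rectangle with corners $a$ and $q$; this tells you only that the number of interior points changed by $\pm 1$, not whether the rectangle is now empty. If two other points were already inside and $b$ leaves, $(a,q)$ is still not an edge, and your test cannot detect this. The paper closes this gap by maintaining, for every pair $p,q$, the integer $I(p,q)$ counting points of $P\setminus\{p,q\}$ currently inside the rectangle with corners $p$ and $q$; at each event the relevant $O(|P|)$ counters are incremented or decremented by one, and an edge is added or removed precisely when $I(p,q)$ transitions to or from zero. These counters must also be initialised---the paper does so via range-tree counting \`a la Ichino--Sklansky, though a naive $O(|P|^3)$ pass would also fit the budget; your stated $O(|P|^2)$ initialisation of the edge set alone does not supply them.
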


\begin{proof}

From Proposition~\ref{prop:events} and the previous definitions we obtain the following Proposition.

\begin{prop}\label{lem:xy-monotoneSufSlopes}
The uniform $2D-$monotone minimum spanning graph of $P$ is one of the $x'y'-$MMSG of $P$ over all Cartesian Systems $x'y'$ with $y'$ of slope in $S_{\text{sufficient}}$.
\end{prop}

We now give a $O(|P|^3)$ time rotational sweep algorithm.
The algorithm initially computes the $x_1y_1-$MMSG of $P$ and then it obtains each $x_{i+1}y_{i+1}-$MMSG of $P$ from the $x_iy_i-$MMSG of $P$.
Throughout the procedure the Cartesian System $x^{\text{opt}}y^{\text{opt}}$ in which the algorithm encountered the minimum cost solution so far is stored.
In its last step, the algorithm recomputes the $x^{\text{opt}}y^{\text{opt}}-$MMSG of $P$, which since it is equal to the rectangle of influence graph $G = (P,E)$ w.r.t.{} the Cartesian System $x^{\text{opt}}y^{\text{opt}}$ (Corollary~\ref{cor:minSpanXy}) it can be computed in $O(|P|\cdot \log |P| + |E|)$ time~\cite{OveW88}. 
The crucial proposition (which we show later) that makes the time complexity of the algorithm equal to $O(|P|^3)$ is that each transition from the $x_iy_i-$MMSG of $P$ to the $x_{i+1}y_{i+1}-$MMSG of $P$ takes $O(|P|)$ time.

For each two points $p,q$ of $P$ let \emph{$I(p,q,x_i,y_i)$} be the number of points of $P\setminus\{p,q\}$ that are included in the rectangle w.r.t.{} the Cartesian System $x_iy_i$ with opposite vertices $p$ and $q$.  
Then, $RV(q,x_i,y_i)$ can be equivalently defined using the quantities $I(p,q,x_i,y_i), p \in P\setminus\{q\}$, as follows: $p \in RV(q,x_i,y_i) \text{ if }I(p,q,x_i,y_i) = 0$.

 We store the $RV(q,x_i,y_i)$, $q \in P$, $i$ $=$ $1$,$2$, \ldots, $2l$ in the data structure $rv(q)$ which is implemented as an array of $|P|$ booleans.
We also store the $I(p,q,x_i,y_i)$, $p,q \in P$, $i$ $=$ $1$,$2$, \ldots, $2l$ in the variable $i(p,q)$.

Computing the Cartesian Systems $x_iy_i$, $i$ $=$ $1$, $2$, \ldots, $2l$ can be done in $O(|P|^2 \log |P|)$ time.
Accompanied with each Cartesian System $x_iy_i$ is the pair of points $(p_i, q_i)$ such that $\overline{p_iq_i}$ is either parallel or perpendicular to the $y_i$ axis or the $y_{i-1}$ axis.

Ichino and Sklansky~\cite{IchS85} noted that employing a range tree~\cite{BenM80,dBerCKO08} that contains the points of $P$ one can calculate i) the rectangle of influence graph of $P$, and ii) the $I(p,q,x,y), p,q, \in P$, for a Cartesian System $xy$.
Applying the previously mentioned approach, noted by Ichino and Sklansky~\cite{IchS85}, are obtained i) the rectangle of influence graph of $P$ w.r.t.{} the Cartesian System $x_1y_1$ (which by Corollary~\ref{cor:minSpanXy} equals to the $x_1y_1-$MMSG of $P$), and ii) the $I(p,q,x_1,y_1)$, $p,q, \in P$.

We now show that we can update all the $rv(p), p \in P$, such that from equal to $RV(p,x_{i-1},y_{i-1}), p \in P$, they become equal to $RV(p,x_i,y_i), p \in P$, in $O(|P|)$ total time.
For each $p \in P\setminus\{p_i,q_i\}$ the update of $rv(p)$ takes $O(1)$ time.
This is true, since only the points $p_i$ and $q_i$ have to be tested for inclusion to or removal from $rv(p)$. 
More specifically, we have to test if for one of them, say $p_i$, the rectangle with corners $p$ and $p_i$ contains (or it does not contain) $q_i$ w.r.t.{} the Cartesian System $x_iy_i$ while it did not contain (or it contained) it w.r.t.{} $x_{i-1}y_{i-1}$. 
If this is true, then the $i(p_i,p)$ changes and $p_i$ has to be tested for membership in $rv(p)$ and included to or removed from $rv(p)$.
Regarding $rv(p_i)$, the update takes $O(|P|)$ time, since for each other point $q \in P\setminus\{p_i,q_i\}$ we have to test if the rectangle with corners $q$ and $p_i$ contains (or it does not contain) $q_i$ w.r.t.{} the Cartesian System $x_iy_i$ while it did not contain it (or it contained it) w.r.t.{} the $x_{i-1}y_{i-1}$ and if so update both the $i(q,p_i)$ and the existence of $q$ in $rv(p_i)$ if necessary.
Similarly, $rv(q_i)$ can be updated in $O(|P|)$ time. 
\end{proof}

We note that the procedure of obtaining the $2D-$UMMSG can be trivially modified such that the uniform $2D$-monotone spanning graph of a point set $P$ with the least number of edges can be obtained in $O(|P|^3)$ time.
Since for an arbitrary Cartesian System $x'y'$ the $x'y'-$MMSG of $P$ is equal to the $x'y'-$monotone spanning graph of $P$ with the least number of edges (Corollary~\ref{cor:minSpanXy}), 
the only modification which is necessary is that in the transition from the Cartesian System $x_iy_i$ to the Cartesian System $x_{i+1}y_{i+1}$ we check if the  $x_{i+1}y_{i+1}-$monotone spanning graph of $P$ with the least number of edges has the least number of edges among all the produced solutions so far.

In Figure~\ref{fig:costEdgesDifference} is given a point set $P$ for which the $2D-$UMMSG of $P$ is different from the uniform $2D-$monotone spanning graph of $P$ with the least number of edges.

\begin{figure}[!htb]
\centering
\begin{tabular}{cc}
\includegraphics[scale=0.33]{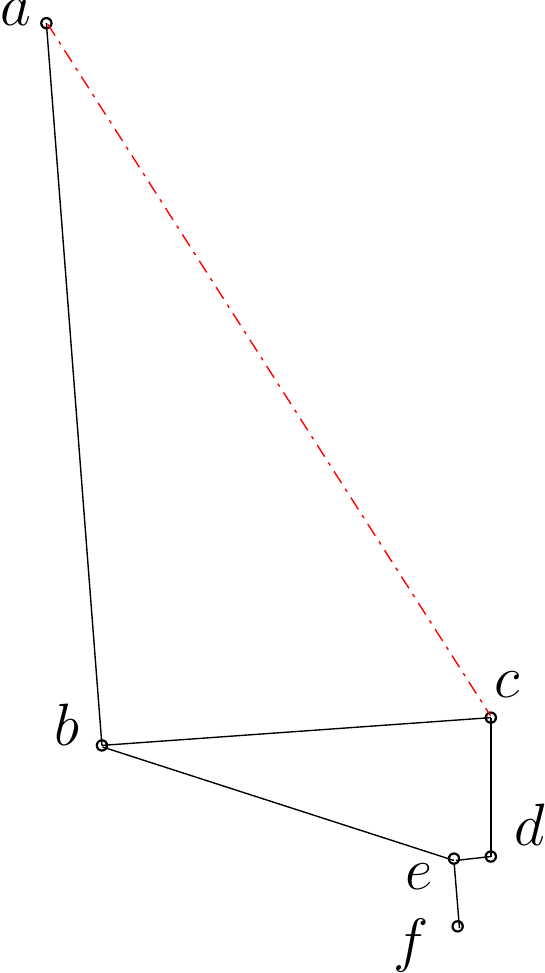} & \includegraphics[scale=0.33]{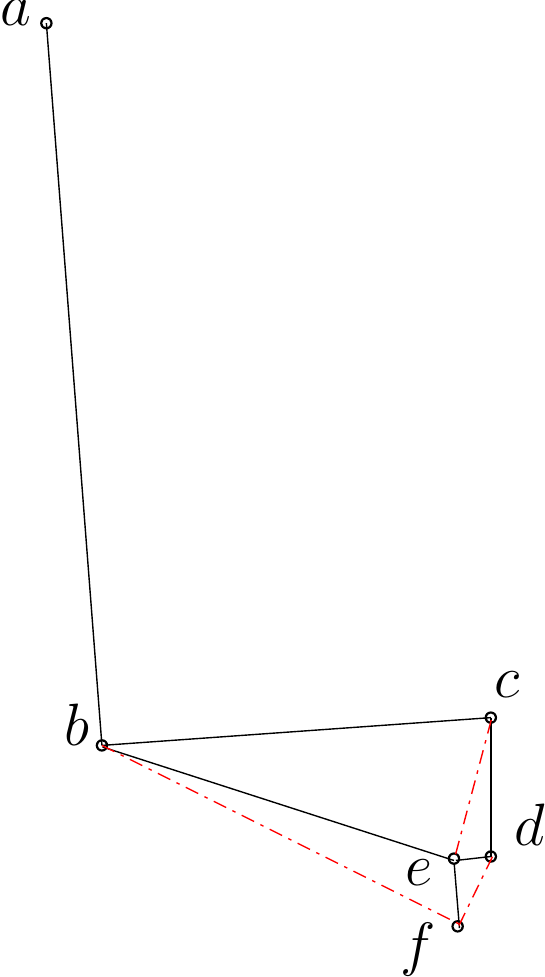} \\
(a) & (b)
\end{tabular}
\caption{The points $a$,$b$ and $c$ form a right angle. Additionally, the points $d$, $e$ and $f$ form a right angle. The slope of $\overline{de}$ is smaller than the slope of $\overline{bc}$. The uniform $2D-$monotone spanning graph with the least number of edges is obtained when the $y'$ axis becomes perpendicular to the $\overline{de}$ and is shown in (a). On the other hand the $2D-$UMMSG is obtained when the $y'$ axis becomes perpendicular to the $\overline{bc}$ and is shown in (b).}\label{fig:costEdgesDifference}
\end{figure}

In Figure~\ref{fig:exampleNonUniform} we give a point set $P$ for which the (non-uniform) $2D-$monotone spanning graph of $P$ with the least number of edges does not coincide to the (non-uniform) $2D-$monotone spanning graph of $P$ that has the minimum cost.

\begin{figure}[!htb]
\centering
\begin{tabular}{cc}
\includegraphics[scale=0.25]{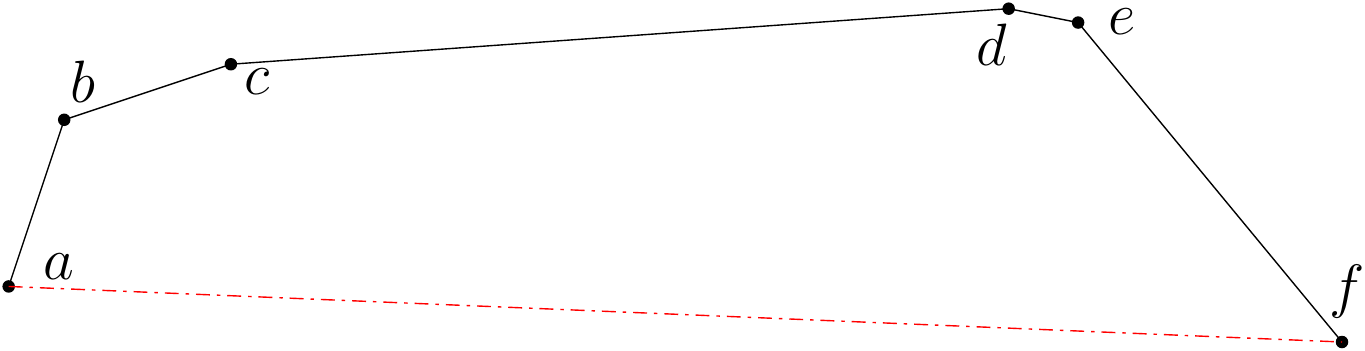} & \includegraphics[scale=0.25]{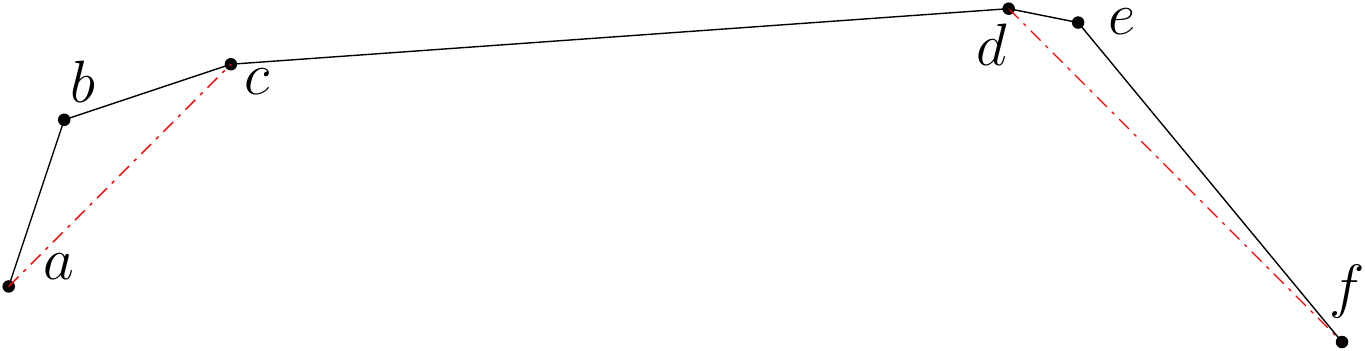} \\
(a) & (b)
\end{tabular}
\caption{The slope of $\overline{ac}$ is $\frac{\pi}{4}$ while the slope of $\overline{fd}$ is $\frac{3\pi}{4}$. In (a) is depicted the $2D-$monotone spanning graph of $P$ with the least number of edges. In (b) is illustrated the $2D-$monotone spanning graph of $P$ that has the minimum cost.}\label{fig:exampleNonUniform}
\end{figure}

Regarding recognizing uniform $2D$-monotone graphs, we note that the $O(|E|\cdot \log |P|)$ time rotational sweep algorithm given in~\cite{MasS17}, which decides if a geometric graph $G = (P,E)$ with a specified vertex $r$ as root is rooted uniform $2D-$monotone, can be easily extended into a $O(|P|^2\cdot \log|P| + |P|\cdot |E|)$ time rotational sweep algorithm that decides if $G$ is uniform $2D-$monotone.
More specifically, in order to decide if $G$ is uniform $2D-$monotone, the $|P|$ rooted geometric graphs $(p_1,G)$, $(p_2,G)$, \ldots, $(p_{|P|},G)$ where $(p_i,G)$ is the geometric graph $G$ with root $p_i$ and $\{p_1$, $p_2$, \ldots, $p_{|P|}\}$ is the vertex set of $G$, are considered.
A Cartesian System $x'y'$ is rotated counterclockwise.
From Fact~\ref{fact:recRootEvents}, it follows that one of these $|P|$ rooted geometric graphs becomes rooted $x'y'-$monotone while previously it was not, or vice versa, only when the $y'$ axis becomes (or leaves the position where it was previously) parallel or orthogonal to a line passing through two points of $P$.
Hence, $O(|P|^2)$ Cartesian Systems need to be considered, which can be computed in $O(|P|^2 \log |P|)$ time.
When the $y'$ becomes (or leaves the position that it previously was) parallel or perpendicular to a line passing through the points $p,q \in P$ then by Fact~\ref{fact:recRootEvents} the status, i.e.{} being rooted $x'y'-$monotone, of the rooted geometric graphs $(p,G)$ and $(q,G)$ may change. 
Hence, the steps of the rooted uniform 2D-monotone recognition algorithm given in~\cite{MasS17} for handling the event associated with the current Cartesian System $x'y'$ regarding the rooted geometric graphs $(p,G)$ and $(q,G)$, are applied.  
Furthermore, if $\overline{pq} \in E$ then by Fact~\ref{fact:recRootEvents} it follows that the status, i.e.{} being rooted $x'y'-$monotone, of each $(r,G), r\in P\setminus\{p,q\}$, may also change.
Hence, for each $(r,G), r\in P\setminus\{p,q\}$, the steps of the rooted uniform 2D-monotone recognition algorithm given in~\cite{MasS17} for handling the event associated with the current Cartesian System $x'y'$ are applied.
Since, the remaining steps, i.e.{} after the calculation of the sufficient axes, of the rooted uniform 2D-monotone recognition algorithm, given in~\cite{MasS17}, regarding any of these $|P|$ rooted geometric graphs take $O(|E|)$ time (Fact~\ref{fact:rU2Dm-recCost}), applying the remaining steps regarding all these $|P|$ rooted geometric graphs, takes $O(|P|\cdot |E|)$ total time.

\section{A $2-$Approximation Algorithm for the $k-$Rooted $y-$Monotone Minimum Spanning Graph Problem} \label{sec:2app-kRooted}

We now study the problem of producing the $k-$rooted $y-$monotone minimum spanning graph of a $k-$rooted point set $P$, where $1 < k < |P|$.
We assume that no two points have the same $y$ coordinate.

Let $P$ be a point set and $a,b\in \mathbb{R}$ then $P_{y > a}$ is the subset of points of $P$ whose $y$ coordinate is greater than $a$. 
Similarly are defined $P_{y \geq a}$, $P_{y < a}$ and $P_{y \leq a}$.
$P_{a < y < b}$ is the subset of points of $P$ whose $y$ coordinate is between $a$ and $b$. 
Similarly are defined $P_{a < y \leq b}$, $P_{a \leq y < b}$ and $P_{a \leq y \leq b}$.

In~\cite[Lemma 1]{MasS17} it is noted that the rooted $y-$monotone minimum spanning graph of a rooted point set $P$ with root $r$ is the union of the rooted $y-$monotone minimum spanning graphs of (i) $P_{y \leq y(r)}$ and (ii) $P_{y \geq y(r)}$.
The previous Fact is extended to the following Lemma.

\begin{lemma} \label{lem:kR_Optimal_Separation}
Let $P$ be a $k-$rooted point set, with $1 < k < |P|$, where $r_1$, $r_2$, \ldots, $r_k$ are the roots of $P$ such that $y(r_1)$ $<$ $y(r_2)$ $<$ \dots $<$ $y(r_k)$.
The $k-$rooted $y-$monotone minimum spanning graph of $P$ is the union of
\begin{enumerate}[nosep]
\item the rooted $y-$monotone minimum spanning graph of $P_{y \leq y(r_1)}$.
\item the rooted $y-$monotone minimum spanning graph of $P_{y \geq y(r_k)}$.
\item the $2-$rooted $y-$monotone minimum spanning graph of $P_{y(r_i) \leq y \leq y(r_{i+1})}$, $1 \leq i \leq k-1$.
\end{enumerate}
\end{lemma}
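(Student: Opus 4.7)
The plan is to reduce the lemma to the following structural characterization, which generalizes the $k=1$ case noted in~\cite{MasS17}: a spanning geometric graph $G$ of $P$ is $k-$rooted $y-$monotone if and only if (a) the subgraph of $G$ induced by $P_{y \leq y(r_1)}$ is rooted $y-$monotone with root $r_1$, (b) the subgraph induced by $P_{y \geq y(r_k)}$ is rooted $y-$monotone with root $r_k$, and (c) for each $1 \leq i \leq k-1$, the subgraph induced by $P_{y(r_i) \leq y \leq y(r_{i+1})}$ is $2-$rooted $y-$monotone with roots $r_i$ and $r_{i+1}$.

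For the ``only if'' direction I would use the fact that any $y-$monotone path in $G$ from a root $r$ to a point $p$ must have every intermediate vertex with $y$ coordinate between $y(r)$ and $y(p)$. Applied to $r = r_1$ and $p \in P_{y \leq y(r_1)}$ this yields (a); symmetrically it yields (b); and for (c) apply it to both $r = r_i$ and $r = r_{i+1}$ together with each $p$ in the intermediate layer. For the ``if'' direction, a $y-$monotone path from a root $r_i$ to any other point $p$ can be produced by concatenating witness paths across successive layers: walk from $r_i$ through $r_{i-1}, r_{i-2}, \ldots$ down to the layer containing $p$ and then to $p$, or symmetrically upward if $p$ lies above $r_i$. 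Each segment is $y-$monotone in the same direction because consecutive roots have strictly ordered $y$ coordinates, and no vertex is repeated because the $y$ coordinates of $P$ are pairwise distinct.

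To deduce the lemma, let $H$ be the union of the pieces (1)--(3). Distinct layers share at most one vertex (a root), and since no two points of $P$ share a $y$ coordinate no edge belongs to two layers; hence the edge sets of the pieces are pairwise disjoint, $\mathrm{cost}(H)$ equals the sum of the piece costs, and by the characterization $H$ is $k-$rooted $y-$monotone. Conversely, for any $k-$rooted $y-$monotone spanning graph $G$ of $P$, the characterization forces the induced subgraph of $G$ on each layer to be a rooted (or $2-$rooted) $y-$monotone spanning graph of that layer, whose cost is at least that of the corresponding optimal piece; summing these lower bounds, plus the nonnegative contribution of any edges of $G$ that cross layers, gives $\mathrm{cost}(G) \geq \mathrm{cost}(H)$. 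The main obstacle is the ``if'' direction of the characterization, where the subtlety is to verify that the concatenated path is simple and $y-$monotone; both properties reduce to the strict ordering of root heights together with the assumption that no two points share a $y$ coordinate.
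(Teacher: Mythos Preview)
The paper does not prove this lemma; it simply states it as the natural extension of the $k=1$ case from~\cite[Lemma 1]{MasS17} and moves on. Your proposal is correct and supplies exactly the argument the paper omits: the layer-wise characterization you state (a spanning graph is $k$-rooted $y$-monotone iff each induced ``slab'' subgraph is appropriately rooted $y$-monotone) is the right intermediate step, and from it the optimality of the union follows by the disjointness-of-edge-sets / lower-bound-per-slab counting you describe. Your handling of the subtle points---simplicity of the concatenated path and the fact that no edge lies in two slabs---correctly invokes the standing assumption that no two points share a $y$ coordinate. It is worth noting that your characterization is the ``subgraph'' analogue of the paper's Proposition~\ref{prop:k-rY-mChar}, which gives a per-vertex neighbor condition instead; either one would drive the proof, but yours is the more direct route to the cost comparison.
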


\begin{theorem} \label{thm:kRootedYmonMinCost}
Given a $k-$rooted point set $P$, with $1 < k < |P|$, we can obtain in $O(|P|\cdot \log^2 |P|)$ time a $k-$rooted $y-$monotone spanning graph of $P$ with cost at most twice the cost of the $k-$rooted $y-$monotone minimum spanning graph of $P$.
\end{theorem}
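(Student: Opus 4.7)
The plan is to turn Lemma~\ref{lem:kR_Optimal_Separation} into an algorithm by approximating each of its three pieces separately and then taking their union.

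First, I would compute exactly the two ``cap'' pieces: the rooted $y-$monotone minimum spanning graph of $P_{y \leq y(r_1)}$ (with root $r_1$) and the rooted $y-$monotone minimum spanning graph of $P_{y \geq y(r_k)}$ (with root $r_k$). By Fact~\ref{fact:ryMMSG_computation} each of these takes $O(n \log^2 n)$ time, where $n$ is the size of the piece. For each intermediate slab $S_i = P_{y(r_i) \leq y \leq y(r_{i+1})}$ I cannot afford to compute the $2-$rooted $y-$monotone minimum spanning graph exactly (that is precisely the open problem), so instead I would compute two rooted $y-$monotone minimum spanning graphs on $S_i$: one with root $r_i$, call it $T_i^{\downarrow}$, and one with root $r_{i+1}$, call it $T_i^{\uparrow}$. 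The output of the algorithm is the union $H$ of all these graphs across the two caps and all $k-1$ slabs.

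To verify that $H$ is $k-$rooted $y-$monotone, note that for every non-root point $p \in S_i$, the graph $T_i^{\downarrow}$ provides a $y-$monotone path from $p$ down to $r_i$ and $T_i^{\uparrow}$ provides a $y-$monotone path from $p$ up to $r_{i+1}$. Any other root $r_j$ can then be reached by concatenating slab-paths through the consecutive roots $r_i, r_{i+1}, \ldots, r_j$ (or in the reverse direction); since the roots are sorted by $y$-coordinate, the concatenation is still $y-$monotone. Points in the two caps are handled analogously.

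For the cost bound, let $\mathrm{OPT}$ denote the cost of the $k-$rooted $y-$monotone minimum spanning graph of $P$, and let $C_{\text{bot}}, C_{\text{top}}, C_i$ be the costs of the three kinds of pieces appearing in Lemma~\ref{lem:kR_Optimal_Separation}, so $\mathrm{OPT} = C_{\text{bot}} + C_{\text{top}} + \sum_i C_i$. Because any $2-$rooted $y-$monotone spanning graph of $S_i$ is in particular a rooted $y-$monotone spanning graph of $S_i$ with respect to either root, the cost of each of $T_i^{\downarrow}$ and $T_i^{\uparrow}$ is at most $C_i$. Hence the cost of $H$ is at most $C_{\text{bot}} + C_{\text{top}} + \sum_i 2 C_i \leq 2 \,\mathrm{OPT}$.

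Finally, for the running time, I would argue that the non-root points are partitioned among the pieces and each root appears in at most two pieces, so the total size over all calls is $O(|P|)$. Applying Fact~\ref{fact:ryMMSG_computation} to each piece and using concavity of $n \log^2 n$ yields an overall $O(|P| \log^2 |P|)$ bound. The only mildly subtle step is the $y-$monotonicity of the concatenated inter-root paths, but it follows immediately from the ordering of the $y(r_j)$'s; everything else is a routine assembly of the building blocks already in the paper.
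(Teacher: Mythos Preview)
Your proposal is correct and follows essentially the same approach as the paper: the paper packages the slab step as a separate lemma (computing the union of the two single-rooted optima on each $S_i$ and showing it $2$-approximates the $2$-rooted optimum there), and then invokes Lemma~\ref{lem:kR_Optimal_Separation} and Fact~\ref{fact:ryMMSG_computation} exactly as you do. One small slip: $n\log^2 n$ is convex, not concave, but the $O(|P|\log^2|P|)$ bound follows anyway simply because $\log^2 n_i \le \log^2 |P|$ for every piece.
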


\begin{proof}

For a $2-$rooted point set $P$ with roots $r_1$ and $r_{2}$ that are the lowest and highest points of the point set, respectively, we prove the following Lemma.

\begin{lemma} \label{lem:2rOptConst}
Given a $2-$rooted point set $P$ with roots $r_1$ and $r_{2}$ that are the lowest and highest points of the point set, respectively, we can obtain in $O(|P|\cdot \log^2 |P|)$ time a $2-$rooted $y-$monotone spanning graph of $P$ with cost at most twice the cost of the $2-$rooted $y-$monotone minimum spanning graph of $P$.
\end{lemma}

\begin{proof}
Initially, we employ Fact~\ref{fact:ryMMSG_computation} to $P$ considering it to have only the root $r_1$ and obtain the geometric graph $G_1$.
Then, we employ Fact~\ref{fact:ryMMSG_computation} to $P$ considering it to have only the root $r_{2}$, obtaining $G_{2}$.
In the final step we return the union of $G_{1}$ and $G_{2}$. 
$G_1 \cup G_2$ is $2-$rooted $y-$monotone since $G_{1}$ ($G_{2}$) is rooted $y-$monotone with root $r_1$ (resp.{}, $r_2$).
We now show that $G_1 \cup G_2$ has cost at most twice the cost of the $2-$rooted $y-$monotone minimum spanning graph $G^{\text{opt}}$ of $P$.
Since, in $G^{\text{opt}}$ all the points $p$ are connected with $r_1$ ($r_2$) by $y-$monotone paths it follows that its cost is greater than or equal to the cost of $G_1$ (resp.{}, $G_2$).
Hence, the cost of $G_1 \cup G_2$ which is less than or equal to the sum of the costs of $G_1$ and $G_2$ is at most twice the cost of $G^{\text{opt}}$. 
\end{proof}

From Lemma~\ref{lem:kR_Optimal_Separation}, Fact~\ref{fact:ryMMSG_computation} and Lemma~\ref{lem:2rOptConst} we obtain the Theorem.
\end{proof}

A $2-$rooted planar geometric graph $G = (P,E)$ with roots $r_1$, $r_2$ s.t.{} $y(r_1) < y(p) < y(r_2), p \in P\setminus\{r_1,r_2\}$, is $2-$rooted $y-$monotone if and only if for each $p \in P\setminus\{r_1,r_2\}$ there exist $q_1, q_2 \in \text{Adj}(p)$ with $y(q_1) < y(p)< y(q_2)$~\cite{LeeP77}.
Furthermore, a rooted geometric graph $G = (P,E)$ with a (single) root $r$ that is not the highest or lowest point of $P$ is rooted $y-$monotone if and only if for each $p \in P\setminus\{r\}$ there exists $q \in \text{Adj}(p)$ such that $y(q)$ is between $y(r)$ (inclusive) and $y(p)$~\cite{MasS17}.
We extend the previous two Propositions to the following equivalent characterization of $k-$rooted $y-$monotone graphs where the latter implies an efficient recognition algorithm for $k-$rooted $y-$monotone graphs.

\begin{prop} \label{prop:k-rY-mChar}
Let $G = (P,E)$ be a $k-$rooted geometric graph, where $1 < k < |P|$, with roots $r_1$, $r_2$, \ldots, $r_k$ such that $y(r_1)$ $<$ $y(r_2)$ $<$ \dots $<$ $y(r_k)$.
$G$ is $k-$rooted $y-$monotone if and only if 
\begin{enumerate}[nosep]
\item for each $p \in P_{y < y(r_1)}$ there exists $q \in \text{Adj}(p)$ s.t.{} $y(q) \in (y(p),y(r_1)]$.
\item for each $p \in P_{y > y(r_k)}$ there exists $q \in \text{Adj}(p)$ s.t.{} $y(q) \in [y(r_k),y(p))$.
\item for each $p \in P_{y(r_i) < y < y(r_{i+1})}$ there exist $q_1, q_2 \in \text{Adj}(p)$ s.t.{} $y(q_1) \in [y(r_i),y(p))$ and $y(q_2) \in (y(p),y(r_{i+1})]$, $i$ $=$ $1$, $2$, \ldots, $k-1$.
\item there exists $q \in \text{Adj}(r_1)$ s.t.{} $y(q) \in (y(r_1), y(r_2)]$.
\item there exists $q \in \text{Adj}(r_k)$ s.t.{} $y(q) \in [y(r_{k-1}), y(r_k))$.
\item there exist $q_1, q_2 \in \text{Adj}(r_i)$ s.t.{} $y(q_1) \in [y(r_{i-1}),y(r_i))$ and $y(q_2) \in (y(r_i), y(r_{i+1})]$, $2 \leq i \leq k-1$. 
\end{enumerate}
\end{prop}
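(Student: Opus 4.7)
The plan is to prove the two directions of the equivalence separately.

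For the forward direction (each condition is necessary), I would fix a condition at a time and extract the required neighbor from a $y$-monotone path guaranteed by the hypothesis. For condition~1, given $p \in P_{y < y(r_1)}$, take a $y$-monotone path from $r_1$ to $p$. Since $y(p) < y(r_1)$ and no two points share a $y$-coordinate (the standing assumption), the path has strictly decreasing $y$-coordinates, so the vertex $q$ immediately preceding $p$ lies in $\mathrm{Adj}(p)$ and satisfies $y(q) \in (y(p), y(r_1)]$. Conditions~2, 4, 5 are handled by the analogous path (from $r_k$, or between consecutive roots). For conditions~3 and~6 I would use two paths, one from the adjacent root below and one from the adjacent root above, to extract both a downward and an upward neighbor.

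For the backward direction (sufficiency), I would show that for each root $r_i$ and each $p \in P\setminus\{r_i\}$ one can construct a $y$-monotone path from $p$ to $r_i$. Without loss of generality assume $y(p) > y(r_i)$. I would build the path iteratively, starting with $v_0 = p$ and choosing at each step a neighbor $v_{t+1} \in \mathrm{Adj}(v_t)$ with $y(v_{t+1}) \in [y(r_i), y(v_t))$. The six conditions make this always possible: if $v_t$ lies in an open strip $(y(r_j), y(r_{j+1}))$ with $j \geq i$, condition~3 yields such a neighbor with $y$-coordinate in $[y(r_j), y(v_t)) \subseteq [y(r_i), y(v_t))$; if $v_t = r_j$ for some $i < j < k$, condition~6 yields one in $[y(r_{j-1}), y(r_j)) \subseteq [y(r_i), y(v_t))$; if $v_t = r_k \neq r_i$, condition~5 applies; and if $y(v_t) > y(r_k)$, condition~2 applies.

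Termination is automatic: the sequence $y(v_0) > y(v_1) > \cdots \geq y(r_i)$ is strictly decreasing and all $v_t$ lie in the finite set $P$, so the process stops. It can stop only at a vertex $v_T$ with $y(v_T) = y(r_i)$, because otherwise the above case analysis would produce a further step; by the no-equal-$y$ assumption this forces $v_T = r_i$. Reversing gives the required $y$-monotone path from $r_i$ to $p$. The $y(p) < y(r_i)$ case is symmetric, using conditions~1, 3, 4, 6 with the roles of ``upward'' and ``downward'' swapped.

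I expect the main obstacle to be the case analysis in the backward direction: one has to verify, in every location where $v_t$ can sit (below $r_1$, above $r_k$, inside an intermediate strip, or on one of the roots $r_{i+1},\ldots,r_k$), that the applicable condition not only supplies a neighbor with strictly smaller $y$-coordinate but also guarantees the neighbor does not overshoot $r_i$ from above. The observation that each condition's ``lower endpoint'' is itself some $y(r_j)$ with $j \geq i$ is what keeps the inductive invariant $y(v_{t+1}) \geq y(r_i)$ intact, and is the key technical point.
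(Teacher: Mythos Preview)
The paper states this proposition without proof, so there is no argument on the paper's side to compare against. Your proof is correct and is the natural direct argument: in the forward direction you read off the required neighbor as the vertex adjacent to the target on a suitable $y$-monotone path guaranteed by the hypothesis, and in the backward direction you greedily descend (or ascend) one neighbor at a time, maintaining the invariant $y(r_i) \le y(v_t)$ by observing that every applicable condition provides a neighbor whose $y$-coordinate is bounded below by some $y(r_j)$ with $j \ge i$. The case split at each step (above $r_k$; at $r_k$; at an intermediate root $r_j$ with $i<j<k$; inside an open strip with lower root $r_j$, $j\ge i$) is exhaustive for vertices with $y(v_t) > y(r_i)$, and strict decrease within the finite set $P$ forces termination at $r_i$.
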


\section{Further Research Directions}

Given a point set $P$ can the $2D-$monotone spanning graph of $P$ that has the least number of edges be produced in polynomial time?

Does there exist a $t-$approximation algorithm, $t < 2$,  for the $k-$rooted $y-$monotone minimum spanning graph problem?

\noindent\textbf{Acknowledgement}: I would like to thank Professor Antonios Symvonis for his valuable contribution in developing the results presented in this paper.

\bibliographystyle{plain}
\bibliography{lib}

\begin{thebibliography}{10}

\bibitem{AlaCGLP13}
S.~Alamdari, T.~M. Chan, E.~Grant, A.~Lubiw, and V.~Pathak.
\newblock Self-approaching graphs.
\newblock In W.~Didimo and M.~Patrignani, editors, {\em Graph Drawing - {GD}
  2012}, volume 7704 of {\em LNCS}, pages 260--271. Springer, 2013.

\bibitem{AloFK85}
N.~Alon, Z.~F{\"{u}}redi, and M.~Katchalski.
\newblock Separating pairs of points by standard boxes.
\newblock {\em Eur. J. Comb.}, 6(3):205--210, 1985.

\bibitem{Ang17}
P.~Angelini.
\newblock Monotone drawings of graphs with few directions.
\newblock {\em Inf. Process. Lett.}, 120:16--22, 2017.

\bibitem{AngCBFP12}
P.~Angelini, E.~Colasante, G.~Di. Battista, F.~Frati, and M.~Patrignani.
\newblock Monotone drawings of graphs.
\newblock {\em J. Graph Algorithms Appl.}, 16(1):5--35, 2012.

\bibitem{AngDKMRSW15}
P.~Angelini, W.~Didimo, S.~Kobourov, T.~Mchedlidze, V.~Roselli, A.~Symvonis,
  and S.~Wismath.
\newblock Monotone drawings of graphs with fixed embedding.
\newblock {\em Algorithmica}, 71(2):233--257, 2015.

\bibitem{BahDMM17}
Y.~Bahoo, S.~Durocher, S.~Mehrpour, and D.~Mondal.
\newblock Exploring increasing-chord paths and trees.
\newblock In J.~Gudmundsson and M.~Smid, editors, {\em Proceedings of the 29th
  Canadian Conference on Computational Geometry, {CCCG} 2017}, pages 19--24.
  Carleton University, Ottawa, Ontario, Canada, 2017.

\bibitem{BenM80}
J.~L. Bentley and H.~A. Maurer.
\newblock Efficient worst-case data structures for range searching.
\newblock {\em Acta Inf.}, 13:155--168, 1980.

\bibitem{BonBCKLV16}
N.~Bonichon, P.~Bose, P.~Carmi, I.~Kostitsyna, A.~Lubiw, and S.~Verdonschot.
\newblock Gabriel triangulations and angle-monotone graphs: Local routing and
  recognition.
\newblock In Y.~Hu and M.~N{\"{o}}llenburg, editors, {\em Graph Drawing and
  Network Visualization - 24th International Symposium, {GD} 2016, Athens,
  Greece, September 19-21, 2016, Revised Selected Papers}, volume 9801 of {\em
  LNCS}, pages 519--531. Springer, 2016.

\bibitem{dBerCKO08}
M.~de~Berg, O.~Cheong, M.~J. van Kreveld, and M.~H. Overmars.
\newblock {\em Computational geometry: algorithms and applications}.
\newblock Springer, 3rd edition, 2008.

\bibitem{DehFG15}
H.~R. Dehkordi, F.~Frati, and J.~Gudmundsson.
\newblock Increasing-chord graphs on point sets.
\newblock {\em J. Graph Algorithms Appl.}, 19(2):761--778, 2015.

\bibitem{GarJPT78}
M.~R. Garey, D.~S. Johnson, F.~P. Preparata, and R.~E. Tarjan.
\newblock Triangulating a simple polygon.
\newblock {\em Inf. Process. Lett.}, 7(4):175--179, 1978.

\bibitem{Gut-BecH10}
N.~Guttmann{-}Beck and R.~Hassin.
\newblock On two restricted ancestors tree problems.
\newblock {\em Inf. Process. Lett.}, 110(14-15):570--575, 2010.

\bibitem{HeH17}
D.~He and X.~He.
\newblock Optimal monotone drawings of trees.
\newblock {\em SIAM Journal on Discrete Mathematics}, 31(3):1867--1877, 2017.

\bibitem{IchS85}
M.~Ichino and J.~Sklansky.
\newblock The relative neighborhood graph for mixed feature variables.
\newblock {\em Pattern Recognition}, 18(2):161--167, 1985.

\bibitem{LarM72}
D.~G. Larman and P.~McMullen.
\newblock Arcs with increasing chords.
\newblock {\em Mathematical Proceedings of the Cambridge Philosophical
  Society}, 72:205--207, September 1972.

\bibitem{LeeP77}
D.~T. Lee and F.~P. Preparata.
\newblock Location of a point in a planar subdivision and its applications.
\newblock {\em {SIAM} J. Comput.}, 6(3):594--606, 1977.

\bibitem{LioLMW98}
G.~Liotta, A.~Lubiw, H.~Meijer, and S.~Whitesides.
\newblock The rectangle of influence drawability problem.
\newblock {\em Comput. Geom.}, 10(1):1--22, 1998.

\bibitem{LubM18}
A.~Lubiw and D.~Mondal.
\newblock Angle-monotone graphs: Construction and local routing.
\newblock {\em CoRR}, abs/1801.06290, 2018.

\bibitem{Lub_oR17}
A.~Lubiw and J.~O'Rourke.
\newblock Angle-monotone paths in non-obtuse triangulations.
\newblock In J.~Gudmundsson and M.~Smid, editors, {\em Proceedings of the 29th
  Canadian Conference on Computational Geometry, {CCCG} 2017}, pages 25--30.
  Carleton University, Ottawa, Ontario, Canada, 2017.

\bibitem{Mas18}
K.~Mastakas.
\newblock {Drawing a Rooted Tree as a Rooted $y-$Monotone Minimum Spanning
  Tree}.
\newblock {\em CoRR}, abs/1806.04720, 2018.

\bibitem{MasS15}
K.~Mastakas and A.~Symvonis.
\newblock On the construction of increasing-chord graphs on convex point sets.
\newblock In {\em 6th Int. Conf. on Information, Intelligence, Systems and
  Applications, {IISA} 2015, Corfu, Greece, July 6-8, 2015}, pages 1--6.
  {IEEE}, 2015.

\bibitem{MasS17}
K.~Mastakas and A.~Symvonis.
\newblock Rooted uniform monotone minimum spanning trees.
\newblock In D.~Fotakis, A.~Pagourtzis, and V.~Th. Paschos, editors, {\em
  Algorithms and Complexity - 10th International Conference, {CIAC} 2017,
  Athens, Greece, May 24-26, 2017, Proceedings}, volume 10236 of {\em LNCS},
  pages 405--417, 2017.
\newblock Full Version:arXiv:1607.03338v2, 2017.

\bibitem{NolPR16}
M.~N{\"{o}}llenburg, R.~Prutkin, and I.~Rutter.
\newblock On self-approaching and increasing-chord drawings of 3-connected
  planar graphs.
\newblock {\em Journal of Computational Geometry}, 7(1):47--69, 2016.

\bibitem{OikS17}
A.~Oikonomou and A.~Symvonis.
\newblock Simple compact monotone tree drawings.
\newblock In F.~Frati and K.{-}L. Ma, editors, {\em Graph Drawing and Network
  Visualization - 25th International Symposium, {GD} 2017, Boston, MA, USA,
  September 25-27, 2017, Revised Selected Papers}, volume 10692 of {\em LNCS},
  pages 326--333. Springer, 2017.

\bibitem{OveW88}
M.~H. Overmars and D.~Wood.
\newblock On rectangular visibility.
\newblock {\em J. Algorithms}, 9(3):372--390, 1988.

\bibitem{Rot94}
G.~Rote.
\newblock Curves with increasing chords.
\newblock {\em Mathematical Proceedings of the Cambridge Philosophical
  Society}, 115:1--12, January 1994.

\bibitem{ShaH75}
M.~I. Shamos and D.~Hoey.
\newblock Closest-point problems.
\newblock In {\em 16th Annual Symposium on Foundations of Computer Science,
  Berkeley, California, USA, October 13-15, 1975}, pages 151--162. {IEEE}
  Computer Society, 1975.

\end{thebibliography}

\end{document}